\newtheorem{theorem}{Theorem}[section]
\begin{document}

\title{Spatiotemporal characteristics in systems of diffusively coupled excitable slow-fast FitzHugh-Rinzel dynamical neurons}
% Force line breaks with \\
\author[1]{Arnab Mondal}%

\author[2,3]{Argha Mondal \thanks{Corresponding author: arghamondalb1@gmail.com}}

\author[1]{Sanjeev Kumar Sharma}%

\author[1]{Ranjit Kumar Upadhyay}%

\author[3]{Chris G. Antonopoulos}
%\email{http://www.Second.institution.edu/~Charlie.Author.}

\affil[1]{Department of Mathematics and Computing, Indian Institute of Technology (Indian School of Mines), Dhanbad 826004, India}
\affil[2]{School of Engineering, Amrita Vishwa Vidyapeetham, Amritapuri, Kollam 690525, India}
\affil[3]{Department of Mathematical Sciences, University of Essex, Wivenhoe Park, Colchester CO4 3SQ, UK}

\date{\today}% It is always \today, today,
 % but any date may be explicitly specified

\maketitle

\begin{abstract}
In this paper, we study an excitable, biophysical system that supports wave propagation of nerve impulses. We consider a slow-fast, FitzHugh-Rinzel neuron model where only the membrane voltage interacts diffusively, giving rise to the formation of spatiotemporal patterns. We focus on local, nonlinear excitations and diverse neural responses in an excitable 1- and 2-dimensional configuration of diffusively coupled FitzHugh-Rinzel neurons. The study of the emerging spatiotemporal patterns is essential in understanding the working mechanism in different brain areas. We derive analytically the coefficients of the amplitude equations in the vicinity of Hopf bifurcations and characterize various patterns, including spirals exhibiting complex geometric substructures. Further, we derive analytically the condition for the development of antispirals in the neighborhood of the bifurcation point. The emergence of broken target waves can be observed to form spiral-like profiles. The spatial dynamics of the excitable system exhibits 2- and multi-arm spirals for small diffusive couplings. Our results reveal a multitude of neural excitabilities and possible conditions for the emergence of spiral-wave formation. Finally, we show that the coupled excitable systems with different firing characteristics, participate in a collective behavior that may contribute significantly to irregular neural dynamics.
\end{abstract}

\vskip 1cm
\noindent {\bf Keywords:} FitzHugh-Rinzel neurons, Diffusion, Spatiotemporal characteristics, Amplitude equations, Hopf bifurcation, Ginzburg-Landau equation, Synchronization, Antispirals, Target waves
\vskip 1cm

\noindent\textbf{
In this paper, we study the slow-fast FitzHugh-Rinzel biophysical excitable neuron model. We demonstrate the appearance of diverse dynamical behavior depending on spike-bursting, firing responses. We also discuss analytically and numerically the bifurcation analysis. To understand the effects of diffusion on spatiotemporal pattern formation in coupled FitzHugh-Rinzel neurons, we consider a FitzHugh-Rinzel system where only the membrane voltage interacts diffusively. We compute the synchronization index to study the collective behavior of the coupled systems and describe various patterns, including target waves, multi-arm spirals and antispirals. Finally, we derive analytically the amplitude equations to validate the existence of antispiral patterns in the vicinity of Hopf bifurcations.}

\section{Introduction}

The complex dynamics of excitable neurons is related to ionic concentrations inside and outside neural membranes and play a major role in generating action potentials. This provides a better way to understand brain functioning in normal and pathophysiological states \cite{ermentrout1998neural,izhikevich2000neural,izhikevich2007dynamical,keane2015propagating,ma2017review,townsend2018detection}. Neurons receive incoming sensory inputs, encode them into different biophysical variables and produce relevant outputs \cite{izhikevich2007dynamical}. Biophysical mechanisms are dynamical in nature and can be studied using concepts from dynamical systems theory. In extended biophysical systems of coupled excitable cells, neural populations produce a transmembrane potential difference that travels across neurons by means of wave propagation \cite{kondo2010reaction,meier2015bursting,milton1993spiral,townsend2018detection}. This may cause the appearance of synchronous activity in groups of neighboring neurons. Such types of synchronized behavior are playing a major role in signal processing in neural populations \cite{belykh2005synchronisation}.

Here, we consider a slow-fast, FitzHugh-Rinzel (FHR) neuron model where only the membrane voltage interacts diffusively, giving rise to the formation of spatiotemporal patterns. The FHR model exhibits diverse bursting activities. In this context, bursting refers to rapid changes in membrane voltage oscillations which involve simultaneous changes between active phases (high amplitude oscillations or bursts of spikes) and silent phases (low amplitude/subthreshold oscillations or amplitude death, i.e., quiescence) \cite{izhikevich2000neural}. We study a dynamical system of three coupled partial differential equations (PDEs) for 2-dimensional spatial pattern formation based on the FHR model, where the recovery and/or slow variables (slow-fast dynamics) interact with the membrane voltage variable. We seek to explore spatial instabilities considering 1- and 2-dimensional diffusion processes and the nature of spatiotemporal patterns. It is the symmetry breaking nature of spatial systems that gives rise to Turing-like structures \cite{turing1952philosophical}. Only one spatial variable, i.e., the membrane potential, determines the scale of the patterns observed. In our system, there exists only one spatially distributed scale, and thus the spatial instability occurs due to the dynamics rather than due to differences in spatial scales \cite{mondal2019diffusion}.

The mathematical analysis for the emergence of spatial structures is important to understand a wide range of biophysical and pathological phenomena \cite{kondo2002reaction,erichsen2008multistability,grace2015reg,kilpatrick2010effects,roxin2004self,song2018classification}. The work here is motivated by our earlier work in \cite{mondal2019diffusion} and other previously studied diffusively coupled biophysical excitable systems, such as those in \cite{ambrosio2018global,ambrosio2019large,gambino2019pattern,jun2010spiral,kuznetsov2017pattern,liao2011pattern,madzvamuse2015stability,wang2008delay}. To the best of our knowledge, a clear analytical study describing the dynamics of a diffusively coupled, slow-fast, neuron model in which only the membrane voltage is spatially distributed, has not yet been deeply explored with respect to pattern formation and emergence of spirals. Often biophysical phenomena observed in experiments can be reproduced by solving numerically a nonlinear system of coupled PDEs that describes them. However, the underlying dynamical behavior of real biophysical systems generally involves a large number of variables.

The excitable FHR model studied here is an extended version of the FitzHugh-Nagumo (FHN) model \cite{fitzHugh1961}. Excitable, biophysical media represented by the slow-fast dynamics of electrically coupled FHR neurons \cite{izhikevich2001synchronisation,rinzel1987formal,rinzel1982bursting,wojcik2015voltage} give rise to wave propagation when the system becomes excited above a threshold, called a traveling wave, that travels through the nerve cells. The responses propagate along the axons, allowing the consideration of the system as a spatially distributed neural medium. The ideal biophysical nerve membrane model that produces bursting was formulated and studied numerically in \cite{rinzel1987formal,rinzel1982bursting}. The FHR model offers a wider view of the original FHN model by taking into consideration a special type of bursting, known as elliptic bursting. As a consequence, the FHR model with different neurocomputational properties exhibits various responses at different levels of biophysical plausibility, similar to real neurons.

Here, we consider diffusively coupled identical FHR neurons described by a system of three coupled PDEs. The model of the single deterministic nerve cell is studied theoretically and numerically. Different parameter regimes that correspond to qualitatively different biophysical behaviors are investigated. The self-sustained and self-organized behavior are investigated for a localized stimulus current with different synaptic couplings. The possible oscillatory behavior is considered and analytical conditions are derived. We find that weak stimulus injected to the system when it resides at the stable steady-state, produces nontrivial firing responses. Consequently, the stimulus perturbations can be applied to the system deterministically, leading to the emergence of different spatiotemporal phenomena.

Particularly, we explore the dynamics in a continuous, 1-dimensional piece of spatially extended neural cable \cite{bar1993turbulence,meier2015bursting,perc2007fluctuating} and in a square, considered as a 2-dimensional piece of spatially extended neural tissue. In the 1-dimensional case, neurons are coupled along a chain of excitable cells by means of a 1-dimensional diffusion term. We find suitable parameter regimes for which the system shows different fluctuations and pattern formations in the vicinity of two Hopf bifurcations for a certain range of coupling strengths. In the 2-dimensional case, we employ the amplitude equations and elucidate phenomena of spatiotemporal pattern formation. Our analysis is based on the complex Ginzburg-Landau equation (CGLE) \cite{aranson2002world,ipsen2000amplitude,ipsen2000amplitud,kuramoto1984springer,zemskov2011amplitude} for the derivation of the amplitude equations using multiple space and time scales. The method provides a Taylor series expansion of the original nonlinear equations with many power operators and depends on the expansion of the linear and nonlinear terms of a small perturbation parameter close to the onset of instability. The coefficients of CGLE obtained here, may have their own values which might relate to certain dynamical behavior in the diffusion process under different experimental conditions. We also derive analytically the stability of the coupled system.

We report on the presence of various emerging spatiotemporal patterns for different electrical impulses in the 2-dimensional system. The slow-fast dynamics supports the formation of spirals due to excitabilities of the slow time-scale in the single-neuron model. Various spiral patterns emerge in the diffusive system when it is not in a stable, stationary state. Spiral and target waves are known as ordered waves, which are often observed in the extended excitable system. Generally, suitable periodic forcing plays a major role in developing target waves \cite{jun2010spiral,ma2010transition,wu2013formation,hu2013selection}. Spiral or antispiral waves behave asymptotically as plane waves far from their cores \cite{nicola2004antispiral,gong2003antispiral} and a variety of complex patterns can manifest, such as target waves, spirals-antispirals, hexagons, spot-stripes, mixed patterns, etc., \cite{kang2020formation,kondo2010reaction,liu2020synchronisation,ma2013simulating,raghavachari1999waves,wang2017}. Spirals are particular types of patterns that rotate around a central point, known as the rotor. Rotors show powerful rhythmic activity by sending rotating-type, robust, wavy patterns outwards. The head of a stable rotating free spiral wave moves around a circular core. The motion of the spiral's tip controls the dynamics of spiral waves. Local excitatory connections play essential role in the dynamics and emergence of spirals during certain brain functions in the cortex \cite{huang2010spiral}. Antispiral waves observed near the Hopf bifurcation and the direction of wave-propagation are determined by the competition between waves and their surrounding bulk oscillations. The direction of the wave-propagation (either outwards or inwards) plays a role in characterizing spiral or antispiral waves. Spiral patterns may pinpoint to a mechanism that changes the irregular dynamics of cortical neurons to rhythmic behavior \cite{schiff2007dynamical}. Particularly, the mechanism changes the frequency of oscillations and amplitudes, as well as the spatial coherence activity. Our results suggest that spiral waves can emerge in a 2-dimensional spatially extended system with dynamics around a Hopf bifurcation. We report on the formation of single- and multi-arm spiral patterns. Spiral waves emerge when the target waves break-up and are associated with spatial heterogeneity in the system \cite{ma2013simulating}.

Spatiotemporal patterns can often be observed in coupled excitable oscillators, cardiovascular systems and in neural systems \cite{hu2013selection,jun2010spiral,kremmydas2002spiral,ma2010transition,witkowski1998spatiotemporal,wu2013formation}. It is important to understand the characteristics and formation of these spirals. It has been observed that stable spiral patterns can emerge in an oscillatory system around a Hopf bifurcation associated with the spatially uniform steady-state solution. The spiral waveforms have been investigated in bursting media in \cite{jiang2015formation}. These results may be related to the formation and development of spiral patterns in neural systems, especially in cortical areas in the brain. These types of spatiotemporal patterns may also be related to the characteristics of the generation of activity recorded in EEG signals, to epileptic seizures, to migraine-related issues and to the transmission of visual images in the cortex \cite{hu2013selection,huang2010spiral,jiang2015formation,keane2015propagating,liu2017transition,milton1993spiral}. Our results show that diffusively coupled FHR neurons with different firing characteristics, depending on control parameters, participate in the collective behavior of the system.

The paper is organized as follows: In Sec. \ref{Section2}, we discuss the single FHR model and perform a bifurcation analysis to reveal the multitude of dynamical behaviors as a function of the external current. In Sec. \ref{Section3}, we study, analytically and numerically, the effects of diffusion on coupled FHR neurons arranged in a 1-dimensional configuration. We also discuss the synchronization index that we use to quantify the collective behavior in the spatially extended systems. In Sec. \ref{Section4}, we study the diffusive properties of the dynamics of coupled FHR neurons, arranged in a 2-dimensional spatially extended mesh. We describe analytically the diffusive properties using the amplitude equation and present the results that reveal complex structures for positive diffusive couplings, such as target waves, spiral patterns, and two- and multi-arm spiral waves. Finally, in Sec. \ref{Section5}, we conclude our work and discuss it in the framework of other studies in the field.

\section{The excitable FHR model} \label{Section2}

The FHR model is the extended version of the original FHN model \cite{fitzHugh1961}. Incorporating slow dynamics, it can exhibit a wide range of different firing patterns in certain parameter regimes. The FHR model was introduced by FitzHugh and Rinzel \cite{izhikevich2001synchronisation,rinzel1987formal,rinzel1982bursting, wojcik2015voltage} and is also known as the elliptic bursting model \cite{wojcik2015voltage}. It is described by the system of ordinary differential equations
\begin{align}\label{model}
\dot{u} &= f(u)-v + w + I,\nonumber\\
\dot{v} &= \delta(a + u-bv),\\
\dot{w} &= \mu(c-u-w),\nonumber
\end{align}
where $u$ represents the membrane voltage of the nerve cell and, $v$ and $w$, the recovery and slow variables, respectively. Particularly, $v$ provides a slower negative feedback in the system. Here,
\begin{equation}
f(u)=u-\frac{u^3}{3}\label{f_u}
\end{equation}
is the cubic nonlinearity in the membrane voltage and allows for regenerative, self-excitation via positive feedback. $I$ indicates a constant external current stimulus and $\delta$, $a$, $b$, $\mu$ and $c$ are the parameters of the system. Parameter $\mu$ is a small time-scale number that controls the slow variable $w$ and $c$ plays a role similar to $a$ in the FHN model \cite{fitzHugh1961}. The decrease of $a$ and $c$ gives rise to longer intervals between consecutive bursting activities, with the system showing relatively fixed times in burst duration. Additionally, with the increase of $a$, the interburst intervals become shorter and the oscillations change to tonic spiking \cite{izhikevich2001synchronisation,wojcik2015voltage}. In our work, we have used $\delta=0.08$, $a=0.7$, $b=0.8$, $\mu=0.002$, $c=-0.775$ and we will be varying $I$.

To understand the dynamics of system \eqref{model}, we study its local and global stability properties performing a bifurcation analysis. To this end, its nontrivial fixed point is denoted by $E=(u_0,v_0,w_0)$, where $v_0=(a+u_0)/b$ and $w_0=c-u_0$. In this framework, $u_0$ can be obtained from the real solution to the cubic equation
\begin{equation}\label{zerosolution}
u_{0}^3+Pu_0+Q=0,
\end{equation}
where $P=3/b$ and $Q=-3(I-a/b+c)$. Its discriminant is given by $\Delta=-4P^3-27Q^2<0$, provided that $b>0$, which implies that Eq. \eqref{zerosolution} has only one real root. Particularly, the real solution to Eq. \eqref{zerosolution} is given by
\begin{equation}\label{eq_u_0}
u_0=-\frac{1}{3}\bigg(\Delta_2+\frac{\Delta_0}{\Delta_2}\bigg),
\end{equation}
where $\Delta_0=-3P$, $\Delta_2=\Bigg(\bigg(\Delta_1+\sqrt{\Delta_{1}^2-4\Delta_{0}^3}\bigg)/2\Bigg)^{\frac{1}{3}}$ and $\Delta_1=27Q$. This results to system \eqref{model} admitting the unique nontrivial fixed point $E=(u_0,v_0,w_0)$, where $u_0$, $v_0=(a+u_0)/b$ and $w_0=c-u_0$ are given by Eq. \eqref{eq_u_0}.

The Jacobian matrix, $J$, of system \eqref{model}, computed at the fixed point $E$, is given by
\begin{equation*}
J=\begin{pmatrix}
a_{11} & a_{12} & a_{13}\\
a_{21} & a_{22} & a_{23}\\
a_{31} & a_{32} & a_{33}
\end{pmatrix},
\end{equation*}
where $a_{11}=1-u_0^2$, $a_{12}=-1$, $a_{13}=1$, $a_{21}=\delta$, $a_{22}=-b\delta$, $a_{23}=0$, $a_{31}=-\mu$, $a_{32}=0$ and $a_{33}=-\mu$. The characteristic equation is given by 
\begin{equation}\label{chaequ}
\lambda^3 + a_1\lambda^2 + a_2\lambda + a_3=0,
\end{equation}
where $a_1=-1+b\delta+\mu+u_0^2$, $a_2=\delta-b\delta + b\mu\delta + b\delta u_0^2+ \mu u_0^2$ and $a_3=\mu\delta+b\mu\delta u_0^2$. Analyzing the stability of the system in a neighborhood of the fixed point $E$ using the Routh-Hurwitz criterion \cite{dejesus1987routh}, implies that $a_1>0$, $a_2>0$, $a_3>0$ and that the second order minor $a_1a_2-a_3>0$. Next, we prove a theorem that shows for which values of $a$, $b$ and $c$, the fixed point $E$ is globally asymptotically stable when $I=0$.

\begin{theorem}
The fixed point $E=(u_0,v_0,w_0)$ of system \eqref{model} is globally asymptotically stable in the exterior of the ellipsoid $$\frac{1}{3}\bigg(u^2-\frac{3}{2}\bigg)^2+b\bigg(v-\frac{a}{2b}\bigg)^2+\bigg(w-\frac{c}{2}\bigg)^2=\frac{3b+a^2+bc^2}{4b}$$ for applied current stimulus $I=0$.
\end{theorem}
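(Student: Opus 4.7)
The plan is to construct a quadratic Lyapunov function whose level sets are ellipsoids, and to show that its time derivative along trajectories of system \eqref{model} with $I=0$ is strictly negative outside the ellipsoid in the statement. The natural weighted ansatz is
\begin{equation*}
V(u,v,w)=\frac{1}{2}\left(u^{2}+\frac{v^{2}}{\delta}+\frac{w^{2}}{\mu}\right),
\end{equation*}
where the weights $1/\delta$ and $1/\mu$ are chosen precisely so that the cross terms $-uv$ and $+uw$ produced by $u\dot u$ are cancelled by the contributions $+uv$ and $-uw$ coming from $(v/\delta)\dot v$ and $(w/\mu)\dot w$, respectively. This scaling mirrors the usual trick for multi-time-scale slow-fast systems, in which each variable is weighted by the inverse of its own time constant.

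Carrying out the differentiation with $I=0$, the cross terms indeed drop out and leave
\begin{equation*}
\dot V=u^{2}-\frac{u^{4}}{3}+a v-b v^{2}+c w-w^{2}.
\end{equation*}
Completing the square separately in each variable gives
\begin{align*}
u^{2}-\frac{u^{4}}{3}&=-\frac{1}{3}\left(u^{2}-\frac{3}{2}\right)^{2}+\frac{3}{4},\\
a v-b v^{2}&=-b\left(v-\frac{a}{2b}\right)^{2}+\frac{a^{2}}{4b},\\
c w-w^{2}&=-\left(w-\frac{c}{2}\right)^{2}+\frac{c^{2}}{4}.
\end{align*}
Summing these and collecting constants yields
\begin{equation*}
\dot V=-\left[\frac{1}{3}\left(u^{2}-\frac{3}{2}\right)^{2}+b\left(v-\frac{a}{2b}\right)^{2}+\left(w-\frac{c}{2}\right)^{2}\right]+\frac{3b+a^{2}+bc^{2}}{4b},
\end{equation*}
so that $\dot V<0$ holds exactly in the exterior of the ellipsoid stated in the theorem.

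To close the argument, I would observe that $V$ is positive definite and radially unbounded, so its sub-level sets are compact, while the set where $\dot V\ge 0$ is contained in the bounded closed ellipsoid. Standard Lyapunov theory then guarantees that every forward trajectory enters and remains within a bounded absorbing region. Combined with the local asymptotic stability of $E$ guaranteed by the Routh-Hurwitz conditions immediately preceding the theorem, this delivers the claimed global asymptotic stability of $E$ in the exterior of the ellipsoid.

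The main obstacle is really only the selection of the right weighted norm: choosing $V$ without the $1/\delta$ and $1/\mu$ factors leaves non-sign-definite cross terms in $\dot V$ that cannot be reabsorbed by completing squares, so one would not recover the clean ellipsoid appearing in the statement. Once the weighted ansatz is in place, the computation is mechanical — three square completions — and the ellipsoid radii read off immediately from the constants produced by those completions, with no additional estimates or small-parameter assumptions on $\delta$ or $\mu$ required.
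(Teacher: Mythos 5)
Your proposal is correct and follows essentially the same route as the paper: the authors use the Lyapunov function $L(u,v,w)=u^{2}+v^{2}/\delta+w^{2}/\mu$ (your $V$ up to a factor of $1/2$), observe the same cancellation of cross terms, and complete the same three squares to obtain $\dot L<0$ exactly outside the stated ellipsoid. Your additional remarks on radial unboundedness, the compactness of sublevel sets, and the appeal to the Routh--Hurwitz local stability are a slightly more careful closing of the argument than the paper provides, but the core idea and computation are identical.
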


\begin{proof}
We consider the Lyapunov function $L:\mathbb{R}^3\rightarrow\mathbb{R}$
\begin{equation*}
L(u,v,w)=u^2+\frac{1}{\delta}v^2+\frac{1}{\mu}w^2,
\end{equation*}
where $\delta$ and $\mu$ are positive parameters appearing in system \eqref{model}. Then,
\begin{itemize}
\item{$L(u,v,w)\geq0$ for all $(u,v,w)\in \mathbb{R}^3$,}
\item{$L(u,v,w)=0$ if and only if $(u,v,w)=(0,0,0)$, and}
\item{all sublevel sets of $L$ are bounded, i.e., $L(u,v,w)\rightarrow\infty\mbox{ as }(u,v,w)\rightarrow\infty$.}
\end{itemize}
This shows that $L(u,v,w)$ is positive definite. Then, the time-derivative of $L$ is given by
\begin{align*}
\frac{dL(u,v,w)}{dt}&=2\bigg(u\dot u+\frac{1}{\delta}v\dot v+\frac{1}{\mu}w\dot w\bigg)\nonumber\\
&=-2\Bigg(\frac{1}{3}\bigg(u^2-\frac{3}{2}\bigg)^2+b\bigg(v-\frac{a}{2b}\bigg)^2+\bigg(w-\frac{c}{2}\bigg)^2-\frac{3b+a^2+bc^2}{4b}\Bigg).
\end{align*}
It follows that $\frac{dL}{dt}$ is negative outside the ellipsoid
\begin{equation*}
\frac{1}{3}\bigg(u^2-\frac{3}{2}\bigg)^2+b\bigg(v-\frac{a}{2b}\bigg)^2+\bigg(w-\frac{c}{2}\bigg)^2=\frac{3b+a^2+bc^2}{4b}
\end{equation*}
and that 
\begin{equation*}
\frac{dL(0,0,0)}{dt}=0.
\end{equation*}
Thus, the fixed point $E$ is globally asymptotically stable for $I=0$ in the exterior of the specified ellipsoid.
\end{proof}
Following \cite{izhikevich2001synchronisation,wojcik2015voltage}, we consider the three sets of external currents $I$: (a) $I=0.2$ (set 1), (b) $I=0.43$ (set 2) and (c) $I=0.5$ (set 3) around the two Hopf bifurcation points of system \eqref{model} that we discuss next.

\subsection{Bifurcation analysis of the single FHR model}\label{subsec_bif_anal}

Here, we study the bifurcation properties of the single FHR model \eqref{model} considering the current, $I$, as the bifurcation parameter. We use MatCont, a Matlab package for the numerical continuation and bifurcation analysis of continuous, parameterized dynamical systems \cite{MatCont_paper}. We plot the results in Fig. \ref{bifu}.

Particularly, for $I<0.137$, system \eqref{model} has a stable focus node (solid cyan line), exhibiting a quiescent state (Fig. \ref{bifu}). The system changes its stability and a stable limit cycle appears around $I=0.137$, which corresponds to the supercritical Hopf bifurcation (HB1). As $I$ increases further, the steady state appears again through a second supercritical Hopf bifurcation (HB2) at $I=3.16298$. For $I>3.16298$, the system has a stable focus node (solid cyan line), which corresponds to the quiescent state shown in yellow in the fourth inset. For $0.137<I<3.16298$, the system exhibits different types of firing patterns, shown in blue, red and green in the insets in Fig. \ref{bifu}, such as elliptic-type bursting, mixed-mode oscillations and tonic spiking. With further increase in $I$, the system shows a quiescent state, shown as the yellow curve in Fig. \ref{bifu} for $I=3.8$. In the range $0.137<I<3.16298$, the limit cycle changes its stability through period-doubling bifurcations at $I=0.1925$, $I=0.4683$ and $I=3.1075$. The system has an unstable focus for $0.137<I<0.6$ and a saddle node for $0.7<I<2.6$ (dotted cyan line). Increasing $I$ even more, the system has an unstable focus node up to $I\approx3.16298$. The existence of period-doubling bifurcations in the system is one of the routes to chaos. The system is chaotic where both the fixed point and limit cycle are unstable. As an example, we calculated the Lyapunov exponents (LEs) for $I=0.4$, where both the fixed point and limit cycle are unstable and found that they converge approximately to the values $(0.000121, -0.004023, -0.522634)$, ordered in descending order. As the largest LE is positive, it confirms the system is chaotic. Further on, in Fig. \ref{bifu}, solid and cyan dotted lines denote the quiescent and oscillatory regions, respectively. The stable and unstable limit cycles are shown in solid magenta and dotted black lines, respectively.

\begin{figure*}[!ht]
\centering
\includegraphics[width=\textwidth,height=12cm]{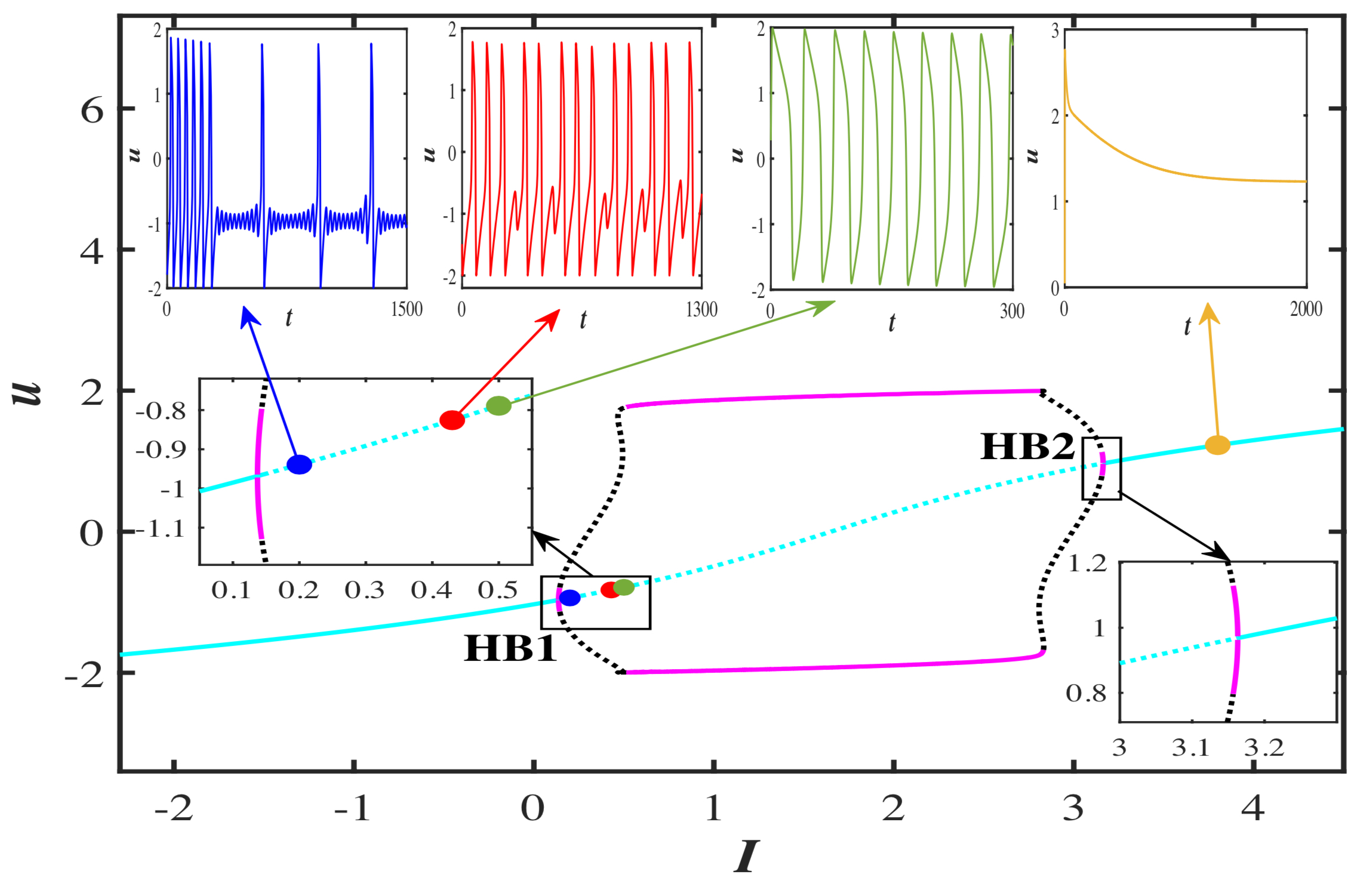}
\caption{Bifurcation diagram of the FHR system \eqref{model}, where solid and dotted cyan lines denote the quiescent and oscillatory regions, respectively. The stable and unstable limit cycles are shown in solid magenta and dotted black lines. Different oscillatory regimes and bifurcations as a function of the bifurcation parameter $I$ are also shown in the upper insets, where we present plots of the time-series $u$ as a function of time $t$ for the parameter sets 1, 2 and 3, shown in blue (elliptic-type bursting), red (mixed-mode oscillations), green (tonic spiking) and with further increase of $I$, it shows quiescent state. Note that HB1 and HB2 correspond to the two critical points $I=0.137$ and $I=3.16298$, respectively, where Hopf bifurcations occur in system \eqref{model}.}\label{bifu}
\end{figure*}

In the following, we verify our numerical results with respect to the critical bifurcation points HB1 and HB2 using an analytical approach \cite{liu1994}. The complex eigenvalues of the linearized FHR system of system \eqref{model} around the fixed point $E$ are given by $\gamma(I)=\alpha(I) \pm i\beta(I)$, where $i$ is the imaginary unit of the complex numbers. Suppose for a certain value of $I$, say $I\equiv I_0$, the following conditions are met: (i) $(I_0)=0$, (ii) $\beta(I_0)=\beta^*\neq0$ and (iii) $\frac{d\alpha(I)}{dI} \Big|_{I=I_0}=\alpha^* \ne 0$. Then, system \eqref{model} undergoes a Hopf bifurcation at $I\equiv I_0$. Conditions (i) and (ii) are known as the non-hyperbolicity conditions and condition (iii), as the transversality condition. To verify analytically the Hopf bifurcation at $I_0$, we consider $I$ as the bifurcation parameter, with $\alpha(I)$ and $\beta(I)$ given by
\begin{align*}
\alpha (I)&=0.31133-0.33333u_{0}^2+ (-68.6643 + 172.1666u_{0}-83.3333u_{0})/A^{\frac{1}{3}}-0.000333A^{\frac{1}{3}}
\end{align*}
and
\begin{equation*}
\beta (I)=\frac{118.9304-298.2023u_{0}^2 + 144.338u_{0}^4}{A^{\frac{1}{3}}}-0.000577A^{\frac{1}{3}},
\end{equation*}
where
\begin{align*}
A &= 10^5\bigg(931.04315-3529.5225u_{0}^2 + 3873.75u_{0}^4-1250u_{0}^6 \\&+ 10.392\sqrt {-67.14 + 25.33u_{0}^2 + 21.64u_{0}^4 + 606.6u_{0}^6-375.4u_{0}^8}\bigg),\\
u_{0}&= \frac{4.27494}{B}- 0.292402B,\\
B &= \bigg(99-60I + 1.414\sqrt{6463-5940I + 1800{I^2}}\bigg)^{\frac{1}{3}}.
\end{align*}

Thus, to find the critical points $I_0$ of system \eqref{model} where Hopf bifurcations occur, we find the values $I\equiv I_0$, where conditions (i)-(iii) hold. To this end, solving the equation $\alpha (I)=0$, we obtain $I=0.137$ and $I=3.16298$. This implies that
\begin{align*}
\alpha(0.137)&=\alpha(3.16298)=0,\\
\beta(0.137)&=\beta(3.16298)=-0.2792,\\
\frac{d\alpha(I)}{dI}\bigg|_{I=0.137}&=0.443,\\
\frac{d\alpha(I)}{dI}\bigg|_{I=3.16298}&=-0.443.
\end{align*}
Consequently, system \eqref{model} undergoes two Hopf bifurcations at the critical points $I=0.137$ (HB1) and $I=3.16298$ (HB2), which are in good agreement with the numerical results shown in Fig. \ref{bifu}.

In the next section, we study the diffusive properties of coupled FHR neurons arranged in a 1-dimensional configuration to investigate the collective dynamics and emergence of biologically plausible patterns around the two Hopf bifurcations.

\section{Diffusively coupled FHR neurons arranged in a 1-dimensional configuration}\label{Section3}

Here, we consider a 1-dimensional configuration of excitable, diffusively coupled FHR neurons through the membrane voltage $u$. The corresponding system is governed by the system of PDEs
\begin{align}\label{1-dimensionalmodel}
\frac{\partial u}{\partial t} &=f(u)-v + w + I+D\frac{\partial^2 u}{\partial x^2},\nonumber\\
\frac{\partial v}{\partial t} &=\delta(a + u-bv),\\
\frac{\partial w}{\partial t} &=\mu(c-u-w)\nonumber,
\end{align}
where $t$ is the time and $u(x,t=0)$, $v(x,t=0)$ and $w(x,t=0)$ for $x \in \Psi$, the initial conditions. $\Psi$ is the space of the 1-dimensional configuration and the function $f(u)$ is given by Eq. \eqref{f_u}. We consider that only the membrane voltage $u$ diffuses to neighboring neurons, modeled by the 1-dimensional diffusion term $D\frac{\partial^2 u}{\partial x^2}$, where $D\geq0$ is the diffusion coefficient or diffusion coupling or constant synaptic coupling strength. By 1-dimensional configuration we mean that the dimension of $\Psi$ is 1, thus $u$, $v$ and $w$ are functions of only one spatial coordinate, i.e., of $x$. Furthermore, we assume the zero-flux boundary condition,
\begin{equation}
\frac{\partial u}{\partial n}=\frac{\partial v}{\partial n} =\frac{\partial w}{\partial n}=0\label{zero-flux_bc}
\end{equation}
for $x \in \partial \Psi$, where $\partial \Psi$ is the boundary of $\Psi$. In this framework, $\frac{\partial}{\partial n}$ denotes the directional derivative along the outward normal $n$ to the boundary $\partial \Psi$. The reason for considering a zero-flux boundary condition is that it makes the 1-dimensional membrane impermeable at the two edges, i.e., no ions flow inside or outside the boundaries, and thus the membrane acts as an isolated cable \cite{meier2015bursting,mondal2019diffusion}. Next, we prove an important theorem.

\begin{theorem}
If the fixed point $E$ of the FHR system \eqref{model} is locally asymptotically stable, then the fixed point of the corresponding 1-dimensional diffusive model \eqref{1-dimensionalmodel} is also locally asymptotically stable.
\end{theorem}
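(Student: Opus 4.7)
My plan is to linearize the PDE system \eqref{1-dimensionalmodel} around the spatially homogeneous equilibrium $E$, decompose the perturbation in the Neumann Laplacian eigenbasis (which respects the zero-flux boundary condition \eqref{zero-flux_bc}), and apply the Routh-Hurwitz criterion mode by mode. Writing $u=u_0+\tilde u$, $v=v_0+\tilde v$, $w=w_0+\tilde w$ and keeping only first-order terms in $(\tilde u,\tilde v,\tilde w)$ produces a linear reaction-diffusion system whose reaction part is governed exactly by the Jacobian $J$ from the ODE analysis and whose diffusion part is $\mathrm{diag}(D,0,0)\,\partial_x^2$. Expanding each component on the cosine basis $\{\cos(k_n x)\}$ associated with the Neumann Laplacian on $\Psi$ reduces the spectral problem, for each wavenumber $k_n$, to an eigenvalue problem for the $3\times 3$ matrix $J_{k_n}=J-k_n^2\,\mathrm{diag}(D,0,0)$. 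Linear stability of the PDE fixed point then follows if, for every admissible $k_n$, all eigenvalues of $J_{k_n}$ have negative real part.

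Next, I would compute the characteristic polynomial of $J_{k_n}$. Because the diffusion only modifies the $(1,1)$ entry (replacing $1-u_0^2$ by $1-u_0^2-Dk_n^2$), essentially the same determinant computation that led to \eqref{chaequ} yields $\lambda^3+A_1\lambda^2+A_2\lambda+A_3=0$ with
\begin{align*}
A_1 &= a_1+Dk_n^2,\\
A_2 &= a_2+(b\delta+\mu)Dk_n^2,\\
A_3 &= a_3+b\delta\mu\,Dk_n^2,
\end{align*}
where $a_1,a_2,a_3$ are the coefficients appearing in \eqref{chaequ}. Since $E$ is assumed locally asymptotically stable, the Routh-Hurwitz conditions give $a_1,a_3>0$ and $a_1a_2-a_3>0$, which force $a_2>0$ as well. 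Positivity of $A_1$ and $A_3$ is then immediate because all added $Dk_n^2$ contributions are non-negative.

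The main step, and the one I expect to be the only nontrivial algebraic obstacle, is showing $A_1A_2-A_3>0$ for every $k_n$. Expanding,
\begin{equation*}
A_1A_2-A_3=(a_1a_2-a_3)+\bigl[a_1(b\delta+\mu)+a_2-b\delta\mu\bigr]Dk_n^2+(b\delta+\mu)(Dk_n^2)^2.
\end{equation*}
The constant term is strictly positive by hypothesis and the quadratic term is non-negative, so it suffices to show that the coefficient of $Dk_n^2$ is non-negative. I would rewrite it, using the expression for $a_2$ given after \eqref{chaequ}, as
\begin{equation*}
a_1(b\delta+\mu)+\delta(1-b)+(b\delta+\mu)u_0^2,
\end{equation*}
which is manifestly positive under the parameter choices of the paper ($0<b<1$, $\delta,\mu>0$) together with $a_1>0$ from the stability assumption. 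This establishes $A_1A_2-A_3>0$ uniformly in $k_n$.

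Finally, I would assemble the conclusion: for every Fourier mode $k_n$ the Routh-Hurwitz conditions are satisfied, hence every mode of the linearized PDE decays, the zero-flux boundary condition prevents leakage of instability at the edges of $\Psi$, and the spatially homogeneous equilibrium of \eqref{1-dimensionalmodel} is locally asymptotically stable, which is what we wanted to prove. The key conceptual point is that diffusion acting only on $u$ shifts the reaction Jacobian by a negative semidefinite perturbation in a single diagonal slot, and the particular algebraic structure of $J$ (in particular $0<b<1$) ensures that this perturbation preserves, rather than destroys, the Routh-Hurwitz conditions.
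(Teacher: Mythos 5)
Your proposal follows essentially the same route as the paper: linearize about $E$, expand in Neumann/Fourier modes so that diffusion shifts only the $(1,1)$ Jacobian entry by $-Dk^2$, and verify the Routh--Hurwitz conditions for the resulting cubic with coefficients $a_1+Dk^2$, $a_2+(b\delta+\mu)Dk^2$, $a_3+b\delta\mu Dk^2$. Your explicit check that the coefficient of $Dk^2$ in $A_1A_2-A_3$ equals $a_1(b\delta+\mu)+\delta(1-b)+(b\delta+\mu)u_0^2>0$ is the one algebraic point the paper asserts without detail, so your write-up is, if anything, slightly more complete.
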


\begin{proof}
We consider the particular solution of the linearized model
\begin{equation*}
\begin{pmatrix}
u\\
v\\
w
\end{pmatrix}=
\begin{pmatrix}
u_0\\
v_0\\
w_0
\end{pmatrix}+\epsilon
\begin{pmatrix}
u_k\\
v_k\\
w_k
\end{pmatrix}
e^{\lambda_k t + ikx} + c.c. + o(\varepsilon^2)
\end{equation*}
of system \eqref{1-dimensionalmodel}, where $c.c$. stands for ``complex conjugate'' terms, $\lambda_k$ is the wave length, $k>0$ the wave number along the $x$ direction and $x$ the directional vector in $\Psi$. The Jacobian matrix of system \eqref{1-dimensionalmodel} at the fixed point $E=(u_0,v_0,w_0)$ is given by
\begin{equation*}
J=\begin{pmatrix}
a_{11}-Dk^2 & a_{12} & a_{13}\\
a_{21} & a_{22} & a_{23}\\
a_{31} & a_{32} & a_{33}
\end{pmatrix}
\end{equation*}
and its characteristic equation by 
\begin{equation}\label{chaequ1-dimensional}
\lambda_k ^3 + b_1\lambda_k ^2 + b_2\lambda_k + b_3=0.
\end{equation}

The coefficients of Eq. \eqref{chaequ1-dimensional} are given by $b_1=a_1+Dk^2$, $b_2=a_2+b\delta Dk^2+ \mu Dk^2$ and $b_3=a_3+b\mu\delta Dk^2$, where $a_i$, $i=1,2,3$ are the coefficients of Eq. \eqref{chaequ}. We analyze the stability of the fixed point $E=(u_0,v_0,w_0)$ using the Routh-Hurwitz criterion \cite{dejesus1987routh}, resulting in the fixed point $E$ being stable if $b_1>0$, $b_2>0$, $b_3>0$ and if the second order minor $b_1b_2-b_3$ is positive. If we consider the stable fixed point of system \eqref{model}, then $b_1>0$, $b_2>0$, $b_3>0$ and $b_1b_2-b_3>0$ as $a_1>0$, $a_2>0$, $a_3>0$, $a_1a_2-a_3>0$, $D>0$ and $k^2>0$. Thus, the fixed point $E$ of the 1-dimensional diffusive model \eqref{1-dimensionalmodel} is locally asymptotically stable.
\end{proof}

It is worth it to note that if the fixed point $E$ of system \eqref{model} is unstable, then the fixed point of the corresponding 1-dimensional diffusive model \eqref{1-dimensionalmodel} can be made stable by increasing appropriately the value of the diffusion coefficient \cite{dubey2000}. For example, if we consider set 1, the fixed point $E=(-0.939127, -0.298909, 0.164127)$ of system \eqref{model} is unstable as $a_1=-0.0520405<0$, $a_2=0.0743373>0$, $a_3=0.000272891>0$ and $a_1a_2-a_3=-0.00414144<0$. Interestingly, for the diffusive model \eqref{1-dimensionalmodel}, the same fixed point is stable if $b_1=-0.0520405+Dk^2>0$, $b_2=0.0743373+0.066Dk^2>0$, $b_3=0.000272891+0.000128Dk^2>0$ and $b_1b_2-b_3=-0.00414144+0.0707747Dk^2 + 0.066D^2k^4>0$, i.e., when $Dk^2>0.05563$, meaning that $D>0.05563/k^2$, where $k^2>0$. Thus, the fixed point of the corresponding diffusive model \eqref{1-dimensionalmodel} can be made stable by increasing appropriately the diffusive coupling, $D$. One can prove analytically similar results for sets 2 and 3, where the corresponding conditions for the stability of the diffusive model \eqref{1-dimensionalmodel} are given by $Dk^2>0.255616$ and $Dk^2>0.315046$, respectively.

In Fig. \ref{1-dimensional_D_R_k}(a), the dashed blue, brown and green curves delineate the boundaries of the stable and unstable regions of sets 1, 2 and 3, and are given by
\begin{align}
D&=\frac{0.05563}{k^2},\label{Dk2_set1}\\
D&=\frac{0.255616}{k^2},\label{Dk2_set2}\\
D&=\frac{0.315046}{k^2},\label{Dk2_set3}
\end{align}
respectively. The solid blue, brown and green lines for sets 1, 2 and 3, respectively, were derived numerically and separate the stable from the unstable regions in  system \eqref{1-dimensionalmodel}. Particularly, the critical $D$ values where these lines are located, were computed using the PDEPE toolbox in Matlab and the bisection method. The method solves system \eqref{1-dimensionalmodel} numerically for different $D$ values until it converges to the critical $D$ at which the transition from unstable to quiescent (i.e., stable) dynamics occurs. For each $D$, we checked visually the dynamics based on plots such as those in Fig. \ref{1-dimensional_spatial}. The method resulted at  $D=6$ for set 1 (solid blue line), at $D=0.98$ for set 2 (solid brown line) and at $D=0.9$ for set 3 (solid green line).

%%%%%%%%%%%%%%%%%%%%%%%%%%%%%%%%%%%%%%%%%%%%%%%%%%%%%%
\begin{figure*}[!ht]
\centering
\includegraphics[width=\textwidth,height=13cm]{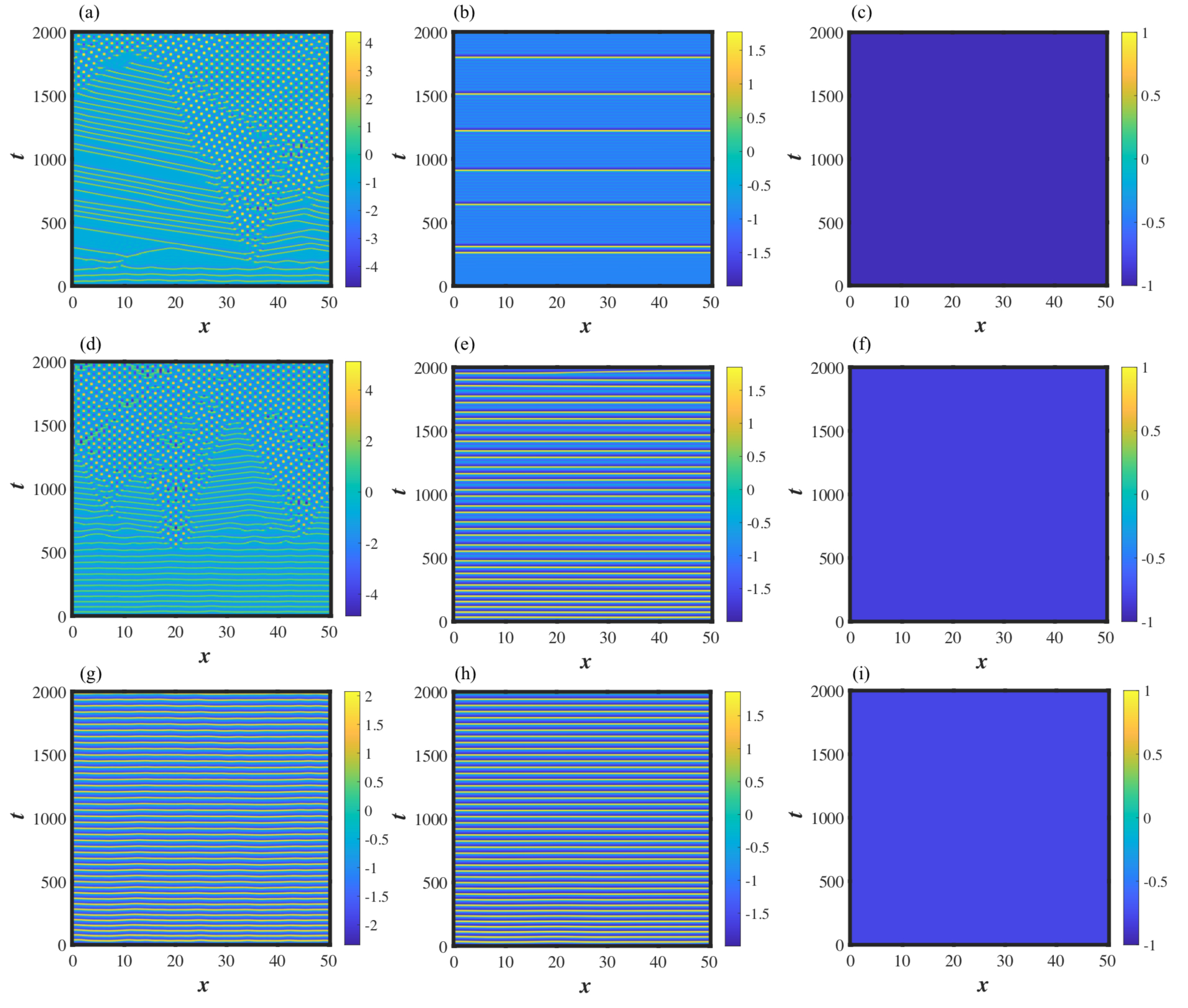}
\caption{Spatiotemporal plots of the system of coupled FHR neurons \eqref{1-dimensionalmodel} arranged in a 1-dimensional configuration for different values of the diffusion coefficient, $D$. Panels (a), (b) and (c) are for set 1 with $D=0.0001$, $D=1$ and $D=8$, respectively. Panels (d), (e) and (f) are for set 2 with $D=0.0001$, $D=0.1$ and $D=1$, respectively, and panels (g), (h) and (i) are for set 3 with $D=0.0001$, $D=0.01$ and $D=1$, respectively. The color bars encode the values of the membrane voltages $u(x,t)$ of diffusively coupled, FHR neurons in system \eqref{1-dimensionalmodel}.}\label{1-dimensional_spatial}
\end{figure*}
%%%%%%%%%%%%%%%%%%%%%%%%%%%%%%%%%%%%%%%%%%%%%%%%%%%%%%

%%%%%%%%%%%%%%%%%%%%%%%%%%%%%%%%%%%%%%%%%%%%%%%%%%%%%%
\begin{figure*}[!ht]
\centering
\includegraphics[width=\textwidth,height=13cm]{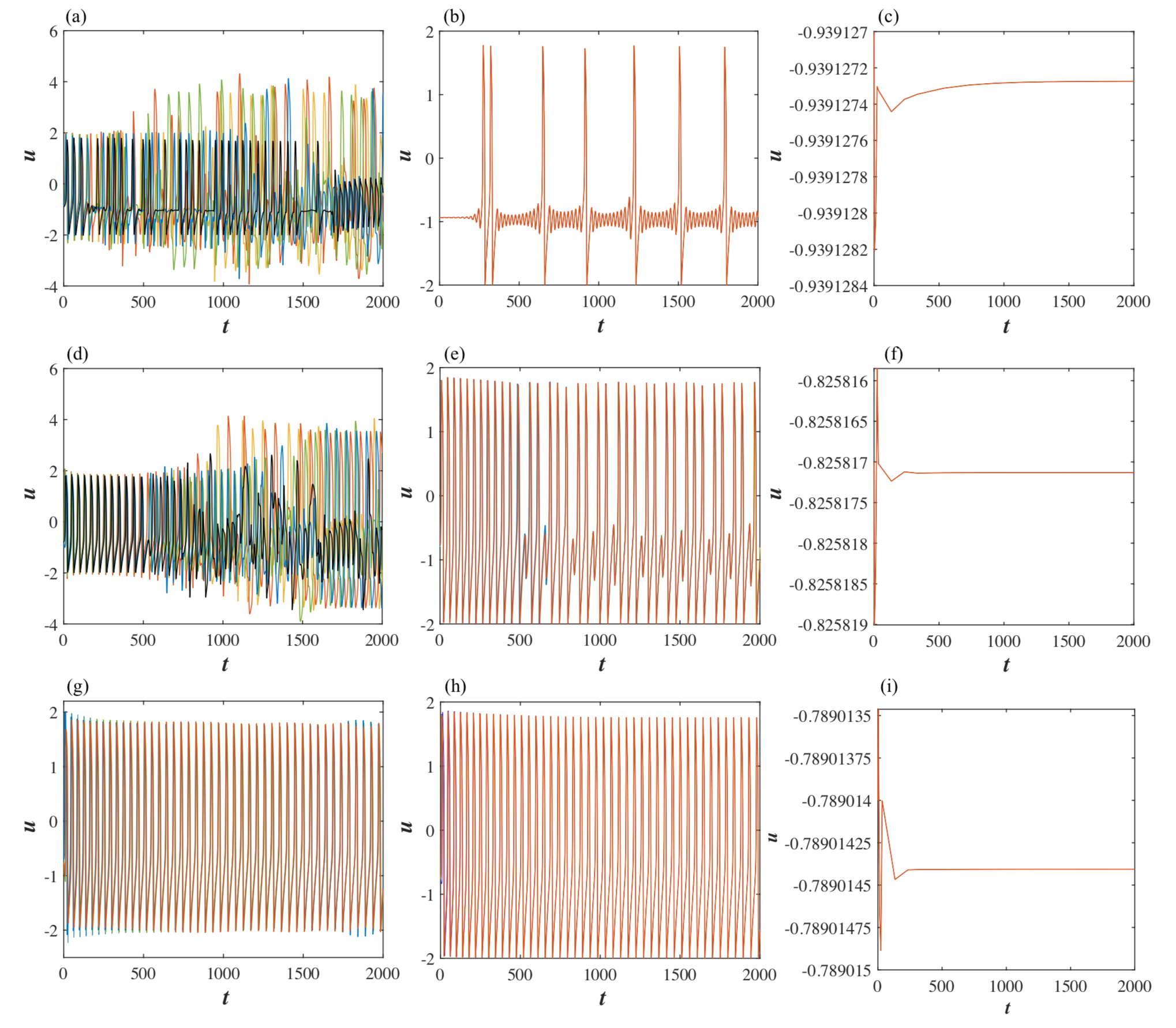}
\caption{Temporal evolution of arbitrarily chosen neurons of the system of coupled FHR neurons \eqref{1-dimensionalmodel} arranged in a 1-dimensional configuration for different values of the diffusion coefficient, $D$. The values of $D$ and sets of parameters are the same as those in the corresponding panels in Fig. \ref{1-dimensional_spatial}. The yellow, black, green, blue and red curves depict the temporal evolution of the arbitrarily chosen FHR oscillators in the coupled systems.}\label{1-dimensional_time}
\end{figure*}
%%%%%%%%%%%%%%%%%%%%%%%%%%%%%%%%%%%%%%%%%%%%%%%%%%%%%%

Next, we investigate numerically the effects of the diffusive coupling, $D$, on the formation of spatiotemporal patterns for the three sets of parameters. In the numerical simulations, we consider system \eqref{1-dimensionalmodel}, which is a system of coupled FHR neurons arranged in a 1-dimensional configuration, spaced at equal distances $\Delta x=0.1$, i.e., we consider $N_x=500$ spatial points and also a time-step $\Delta t=0.01$. Moreover, we consider the zero-flux boundary condition in Eq. \eqref{zero-flux_bc} and initial conditions in the vicinity of the fixed point $E$ of system \eqref{1-dimensionalmodel}. Hence depending on the value of $D$, $E$ can be either unstable or stable. We use the PDEPE toolbox in Matlab to solve numerically system \eqref{1-dimensionalmodel} and obtain its spatial behavior. The individual neurons for sets 1, 2 and 3 exhibit different voltage responses $u$ over time, i.e., firing patterns, which result in various collective dynamics that we study in Subsec. \ref{subsec_synchronisation} via the synchronization index $R$ of Eq. \eqref{eq_synch_index_R}.

First, for parameters in set 1, system \eqref{1-dimensionalmodel} for $D=0$ (which is essentially system \eqref{model}) exhibits mixed-mode type oscillations (elliptic bursting) as can be observed in blue in the left inset in Fig. \ref{bifu}. For smaller diffusion coupling, e.g., $D=0.0001$, the system exhibits inhomogeneous irregular firing, shown in Fig. \ref{1-dimensional_spatial}(a). However, as $D$ increases, it exhibits different firing patterns with bursting characteristics, known as elliptic bursting, a kind of mixed-mode type oscillations. It produces from typical bursting to rapid changes in membrane voltage oscillations, which involve simultaneous spike changes between low- and high-amplitude oscillations. This can be appreciated in Figs. \ref{1-dimensional_spatial}(b) and \ref{1-dimensional_time}(b). The yellow horizontal lines in Fig. \ref{1-dimensional_spatial}(b) indicate the peaks of the generated action potentials, shown in Fig. \ref{1-dimensional_time}(b). For higher diffusion coefficient $D=0.6$, system \eqref{1-dimensionalmodel} for set 1 becomes completely synchronized, confirmed by the synchronization index, $R=1$ that we discuss in Subsec. \ref{subsec_synchronisation}. At sufficiently higher diffusion coefficient $D=8$, all neurons show oscillation death as shown in Fig. \ref{1-dimensional_spatial}(c). In panels (a)-(c) in Fig. \ref{1-dimensional_time}, we show the corresponding time-series of arbitrary neurons for $D=0.0001$, particularly of neurons 1 and 8.

Next, for parameters in set 2, system \eqref{1-dimensionalmodel} for $D=0$ (i.e., system \eqref{model}) shows bursting activity as evidenced in the second inset from the left in Fig. \ref{bifu}. For weak coupling $D=0.0001$, it gives rise to inhomogeneity, which results in a train of irregular spikes, shown in Figs. \ref{1-dimensional_spatial}(d) and \ref{1-dimensional_time}(d). As the diffusion coefficient is increased to $D=0.1$, the system exhibits mixed-mode oscillations with different spike-numbers in single bursts, shown in Figs. \ref{1-dimensional_spatial}(e) and \ref{1-dimensional_time}(e). For even higher diffusion coefficient, e.g., for $D=1$, the system converges to a quiescent state and the dynamics shows a fixed point (Figs. \ref{1-dimensional_spatial}(f) and \ref{1-dimensional_time}(f)).

Last, we consider the case of tonic spiking for parameters in set 3. For $D=0.0001$, 0.01 and 1, spatiotemporal patterns and corresponding time-series of arbitrary neurons are shown in Figs. \ref{1-dimensional_spatial} (g)-(i) and \ref{1-dimensional_time} (g)-(i), respectively. In this case, system \eqref{1-dimensionalmodel} exhibits elliptic type bursting which comprises complex firing activities, more complex than bursting and spiking activities as it involves spikes and bunches of small-amplitude oscillations. This might be the reason set 1 needs higher diffusive coupling for the neurons to synchronize, compared to sets 2 and 3.

With the systematic change in $D$, the dynamics of system \eqref{1-dimensionalmodel} transits from the regime of inhomogeneous instability to a uniform steady-state through the formation of regular structures of trains of tonic spiking at intermediate $D$ values. For sets 2 and 3 and intermediate diffusion coefficients, $D=0.1$ and $D=0.01$, all neurons become synchronized. This synchronization property is verified by the synchronization index $R$ that we discuss in Subsec. \ref{subsec_synchronisation}.

The fixed point $E$ of system \eqref{model} is unstable for the values of the diffusion coefficient $D$ in panels (a), (b), (d), (e), (g), (h) in Figs. \ref{1-dimensional_spatial} and \ref{1-dimensional_time}. The stabilizing mechanism responsible for the stable dynamics in panels (c), (f), (i) in Figs. \ref{1-dimensional_spatial} and \ref{1-dimensional_time} results from the stability properties of the fixed point $E$ of systems \eqref{model} and \eqref{1-dimensionalmodel}. As we discussed before, the fixed point $E$ of system \eqref{1-dimensionalmodel} can be made stable by increasing appropriately the value of the diffusion coefficient. The diffusion coefficient $D$ in panels (c), (f), (i) in Figs. \ref{1-dimensional_spatial} and \ref{1-dimensional_time} is equal to 8, 1, 1, respectively, and for these values, the fixed point $E$ of system \eqref{1-dimensionalmodel} is stable. Hence starting with initial conditions in the vicinity of the stable fixed point results in the solutions remaining in its vicinity. This is shown in Fig. \ref{1-dimensional_time}(c), (f), (i) for the three sets of parameters. However, that is not the case for smaller $D$ values where the fixed point $E$ of system \eqref{1-dimensionalmodel} is unstable as can be seen in panels (a), (b), (d), (e), (g), (h) in Figs. \ref{1-dimensional_spatial} and \ref{1-dimensional_time}. In these cases, solutions that start in the vicinity of the unstable fixed point $E$ drift away in time (see panels (a), (b), (d), (e), (g), (h) in Fig. \ref{1-dimensional_time}). Hence to stabilize system \eqref{1-dimensionalmodel} in the vicinity of the fixed point $E$, one must find appropriately big values of the diffusion coefficient, $D$, for which solutions that start in the vicinity of $E$, remain in its vicinity for $t>0$. Consequently, the diffusion coefficient, $D$, plays a major role in the stabilization mechanism in system \eqref{1-dimensionalmodel} as it determines the stability properties of the fixed point $E$.

\subsection{Synchronization index}\label{subsec_synchronisation}

To quantify the collective behavior in systems of coupled oscillators (in our case of diffusively coupled FHR neurons), a synchronization measure based on mean field theory, can be defined. The synchronization index, $R$, ranges in $[0,1]$ \cite{liu2020synchronisation,wang2017}, where 0 indicates complete desynchronization (i.e., all oscillators are completely desynchronized) and 1 complete synchronization (i.e., all oscillators are completely synchronized). The synchronization index, $R$, is defined by
\begin{equation}\label{eq_synch_index_R}
R=\frac{\langle F^2 \rangle-\langle F \rangle^2}{\frac{1}{N_x} \sum_{i=1}^{N_x}(\langle u_i^2 \rangle-\langle u_i \rangle^2)},
\end{equation}
where
\begin{equation*}
F=\frac{1}{N_x} \sum_{i=1}^{N_x} u_i
\end{equation*}
is the average of the membrane voltages $u_i$ over all spatial points and $N_x$ the number of spatial points. The notation $\langle\cdot\rangle$ denotes the mean value of the argument over time.

The results for the synchronization index, $R$, as a function of the diffusion coefficient, $D$, are shown in Fig. \ref{1-dimensional_D_R_k}(b) for sets 1, 2 and 3, in blue, brown and green, respectively. For sets 1 and 2, the coupled neurons are desynchronized for weak diffusive coupling, $D$, as $R$ is close to 0. This indicates that all neurons exhibit almost desynchronized oscillations as shown in the spatial plots in panels (a) and (d) in Fig. \ref{1-dimensional_time}. However, as $D$ increases, all neurons of the three sets exhibit complete synchronization, evidenced in panels (b), (c), (e) and (f) in Fig. \ref{1-dimensional_time}. Interestingly, $R$ is almost equal to 1 for weak coupling $D$ for set 3, which can be appreciated in Fig. \ref{1-dimensional_D_R_k}(b).

\begin{figure*}[!ht]
\centering
\includegraphics[width=\textwidth,height=6cm]{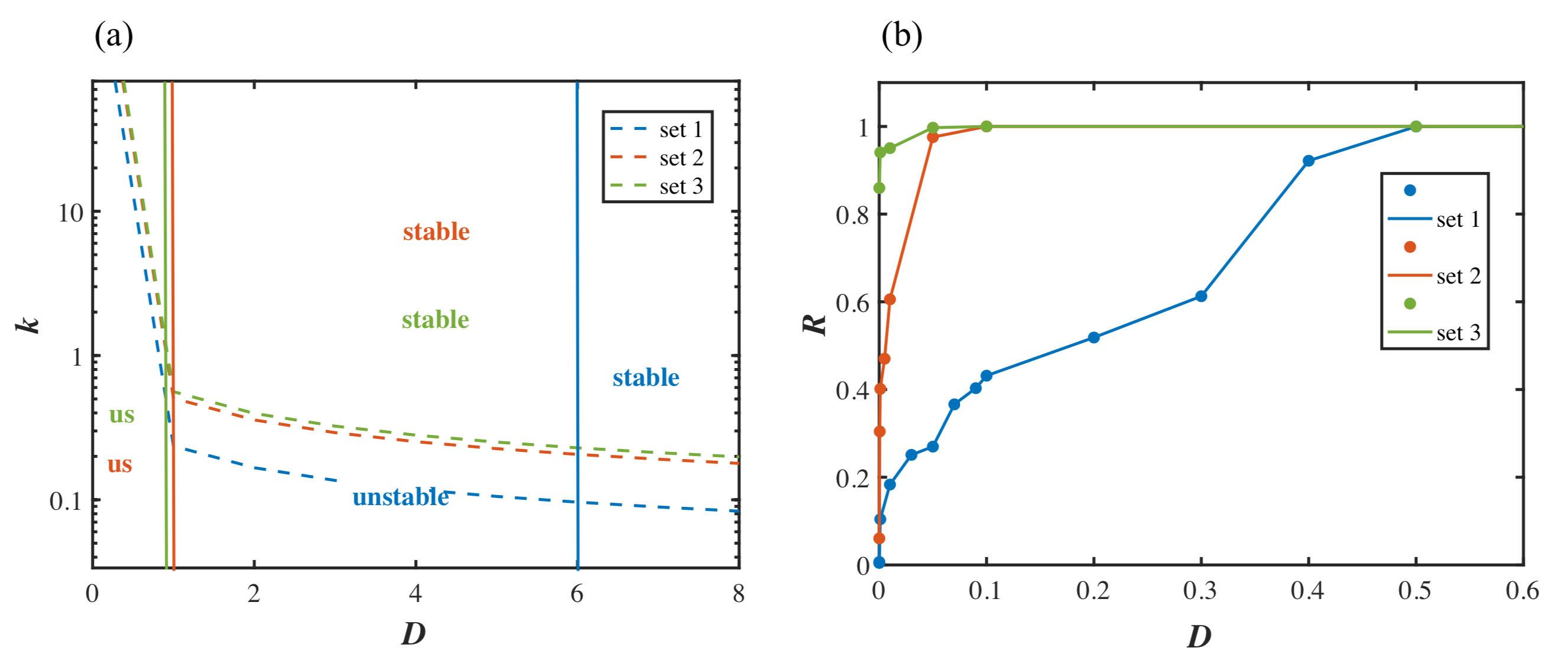}
\caption{Boundaries of stable and unstable regions and synchronization index, $R$, for the sets of parameters 1, 2 and 3. (a) The dashed blue, brown and green curves (see Eqs. \eqref{Dk2_set1}, \eqref{Dk2_set2}, \eqref{Dk2_set3}) indicate the boundaries of the stable and unstable regions of model \eqref{1-dimensionalmodel} for the parameter sets 1, 2 and 3, respectively. The solid blue, brown and green lines are located at $D=6$ (set 1), $D=0.98$ (set 2) and $D=0.9$ (set 3), are derived numerically with the help of the bisection method and, separate the stable from the unstable regions. Here, ``us'' stands for unstable region. (b) The synchronization index, $R$, as a function of the diffusion coefficient, $D$, for the sets of parameters 1, 2 and 3, shown in blue, brown and green, respectively. Note that in panel (b), the line segments connect the numerically computed values of $R$ plotted by small filled circles.}\label{1-dimensional_D_R_k}
\end{figure*}

\section{Diffusively coupled FHR neurons arranged in a 2-dimensional configuration}\label{Section4}

In this section, we investigate different types of pattern formations for diffusively coupled FHR neurons in a square domain, $\Psi$. In this case, the mathematical model is given by the system of coupled PDEs
\begin{align}
\frac{{\partial u}}{{\partial t}}&=f(u)-v + w + I+D\Bigg(\frac{{\partial^2 u}}{{\partial x^2}}+\frac{{\partial^2 u}}{{\partial y^2}}\Bigg),\nonumber\\
\frac{{\partial v}}{{\partial t}}&=\delta \left({a + u-bv} \right),\label{2-dimensionalmodel}\\
\frac{{\partial w}}{{\partial t}}&=\mu \left({c-u-w} \right),\nonumber
\end{align}
where we use similar initial and boundary conditions as in Sec. \ref{Section3}, adapted to the case. Thus, here $u\equiv u(x,y,t)$, $v\equiv v(x,y,t)$ and $w\equiv w(x,y,t)$, where $x$ and $y$ are the spatial coordinates and $t$ is the time. System \eqref{2-dimensionalmodel} can be written in the compact form
\begin{equation}\label{2-dimensionalmodel_compact_form}
\frac{{\partial X}}{{\partial t}}=H(X,I)+D_1\nabla^2X,
\end{equation}
where
\begin{equation*}
X=
\begin{pmatrix}
X_{1}\\
X_{2}\\
X_{3}
\end{pmatrix}
=
\begin{pmatrix}
u\\
v\\
w
\end{pmatrix},\;
H=
\begin{pmatrix}
u-\frac{{u^3}}{3}-v + w + I\\
\delta({a + u-bv})\\
\mu({ c-u-w})
\end{pmatrix}
\end{equation*}
and
\begin{equation*}
D_1=
\begin{pmatrix}
D & 0 & 0\\
0 & 0 & 0\\
0 & 0 & 0
\end{pmatrix}.
\end{equation*}
Equation \eqref{2-dimensionalmodel_compact_form} will prove useful in the next, where we discuss the amplitude equations that we use to study analytically the formation of spatiotemporal patterns in the vicinity of the two Hopf bifurcation points, HB1 and HB2.

\subsection{Amplitude equations}\label{AE}

The dynamics of system \eqref{2-dimensionalmodel} in the vicinity of a Hopf bifurcation can be described by the amplitude equations, also known as CGLE. It's general form is given by \cite{ipsen2000amplitude,ipsen2000amplitud,zemskov2011amplitude}
\begin{equation}\label{CGLE}
\frac{\partial W}{\partial t}=W-(1+i\alpha){\big|W\big|}^2W+(1+i\beta)\nabla^2W,
\end{equation}
where $i$ denotes the imaginary unit of the complex numbers, i.e., $i^2=-1$ and $\nabla^2$ is the 2-dimensional Laplacian operator. As the real parameters $\alpha$ and $\beta$ vary, the complex amplitude, $W$, exhibits rich dynamics. In the following, we seek to find the values of $\alpha$ and $\beta$ in the vicinity of the two Hopf bifurcations, HB1 and HB2 (as discussed in Subsec. \ref{subsec_bif_anal}), i.e., around the bifurcation points $I=0.137$ and $I=3.16298$.

The Jacobian matrix $J$ in this case has two complex conjugate, imaginary eigenvalues $\lambda_{1,2}=\pm\omega i$. The right eigenvectors that correspond to the eigenvalues $\pm i\omega$ are denoted by $U_1$, $U_2=\bar{U}_1$ and the left by $U_1^+$, $U_2^+=\bar{U}_1^+$. For simplicity, we consider $U=U_1=\bar{U}_2$ and $U^+=U_1^+=\bar{U}_2^+$. Furthermore, the right and left eigenvectors can be normalized according to $U_i^+U_j=\delta_{ij}$, where $\delta_{ij}$ is the Kronecker delta, i.e., $U^+U=\bar{U}^+\bar{U}=1$ and $U^+\bar{U}=\bar{U}^+U=0$. In Subsec. \ref{subsec_bif_anal}, we have shown that the two Hopf bifurcations, HB1 and HB2, occur at $I=0.137$ and $I=3.16298$, respectively. Interestingly, the eigenvalues of $J$ for both are the same and are given by $\lambda_{1,2}=\pm 0.279302i$ and $\lambda_3=-0.0036$ as $J$ contains the term $u_0^2$ and its value is the same for both bifurcation points (i.e., $u_0=\pm 0.96829$). The right and left eigenvectors that correspond to the eigenvalue $\lambda_1=0.279302i$ are $U=(0.963143, 0.0600545-0.262111i, -0.00005+0.0069i)^T$ and $U^+=(0.5198-0.1159i, -0.0108 + 1.8586i, -0.4018-1.8639i)$, respectively. These again refer to both bifurcation points as the Jacobian matrix $J$ and its eigenvalues are the same for both.

To compute $\alpha$ and $\beta$ in Eq. \eqref{CGLE}, we start with the unscaled Ginzburg-Landau equation \cite{ipsen2000amplitude,ipsen2000amplitud}
\begin{equation}
\dot{W}=\sigma_1 IW-g\big|W\big|^2W + d\nabla^2W \label{uscgle}
\end{equation}	
and seek to find the values of $\sigma_1$, $g$ and $d$ using the values in Table 1 in \cite{ipsen2000amplitude,ipsen2000amplitud}. Particularly, parameters $\sigma_1$, $g$ and $d$ are given by
\begin{align*}
\sigma_1&=U^+H_{XI}U+U^+H_{XX}(U,l_{001}),\\ 
g&=-\Big(U^+H_{XX}(U,l_{110}) + U^+H_{XX}(\bar{U},l_{200})+\frac{1}{2}U^+H_{XXX}(U,U,\bar{U})\Big),\\
d&=U^+D_1U,
\end{align*}
respectively, where
\begin{align*}
H_{{XX}} \left({\xi ,\,\eta} \right)&=\sum\limits_{i,j=1}^3 {\left. {\frac{{\partial^2 H}}{{\partial X_i \partial X_j}}} \right|_{E=(u_0,v_0,w_0)} \xi_i \eta_j},\\
H_{{XXX}} \left({\xi ,\,\eta ,\,\zeta} \right)&=\sum\limits_{i,j,k=1}^3 {\left. {\frac{{\partial^3 H}}{{\partial {X}_i \partial {X}_j \partial {X}_k}}} \right|}_{E=(u_0,v_0,w_0)} \xi_i \eta_j \zeta_k,\\
l_{001}&=(0.4571, 0.5714, -0.4571)^T,\\
l_{110}&=(0.8212, 1.0265, -0.8212)^T,\\
l_{200}&=(-0.2395-2.1447i,-0.3070-0.0009i, 0.0077-0.0008i)^T.
\end{align*}
In our case,
\begin{equation*}
H=
\begin{pmatrix}
u-\frac{u^3}{3}-v + w + I\\
0.08(0.7 + u-0.8v)\\
0.002(-0.775-u-w)
\end{pmatrix},
\end{equation*}
thus
\begin{equation*}
\frac{\partial^2 H}{\partial X_1^2}\bigg|_{E}=
\begin{pmatrix}
-2u_0\\
0\\
0
\end{pmatrix}
\mbox{ and }\;
\frac{\partial^3 H}{\partial X_1^3}\bigg|_{E}=
\begin{pmatrix}
-2\\
0\\
0
\end{pmatrix},
\end{equation*}
with all other terms in $H_{XX}$ and $H_{XXX}$ being equal to zero. With this in mind, we obtain $\sigma_1=0.4432-0.0988i$, $g=0.3642 + 2.1015i$ and $d=D(0.5006-0.1117i)$. Next, using the transformation
\begin{align}
W&=\sqrt{I}W^{\prime}\nonumber,\\
t&=t^{\prime}/I\nonumber,\\
x&=x^{\prime}/\sqrt{I}\label{CGLE_transformation},\\
y&=y^{\prime}/\sqrt{I}\nonumber,
\end{align}
we obtain
\begin{equation}
\frac{\partial W}{\partial t}=\sigma_1 W-g|W|^2W + d\nabla^2W, \label{scgle}
\end{equation}
which is known as the scaled complex Ginzburg-Landau equation. Transformation \eqref{CGLE_transformation} renders CGLE independent of the distance from the bifurcation point. The dynamics of system \eqref{2-dimensionalmodel} can be described by the solution of Eq. \eqref{scgle}, except for a short initial decay of transient eigenmodes. Finally, introducing the new transformation
\begin{align*}
W&=\sqrt{\frac{\Re(\sigma_1)}{\Re(g)}}W^{\prime}e^{i\frac{\Im(\sigma_1)}{\Re(\sigma_1)}t^{\prime}},\\
t&=\frac{t^{\prime}}{\Re(\sigma_1)},\\
x&=\sqrt{\frac{\Re(d)}{\Re(\sigma_1)}}x^{\prime},\\
y&=\sqrt{\frac{\Re(d)}{\Re(\sigma_1)}}y^{\prime},
\end{align*}
we obtain the dimensionless CGLE \eqref{CGLE}, that is the equation
\begin{equation}
\frac{\partial W}{\partial t}=W-(1+i\alpha){\left| {W} \right|}^2W+(1+i\beta)\nabla^2W,\label{eq10}
\end{equation}
where $\alpha=\frac{\Im(g)}{\Re(g)}=5.7706$ and $\beta=\frac{\Im(d)}{\Re(d)}=-0.2230$. In this framework, $\Re(*)$ and $\Im(*)$ denote the real and imaginary parts of the argument, respectively. Clearly, $\alpha + \beta=5.5476 >0$, which confirms the existence of antispiral patterns in the vicinity of the two Hopf bifurcations \cite{nicola2004antispiral}, HB1 and HB2, at the critical points $I=0.137$ and $I=3.16298$, respectively.

\subsection{Results}

Here, we focus on the diffusively coupled FHR model \eqref{2-dimensionalmodel} and study numerically the emergence of complex structures in a square spatial domain $\Psi$ for $D>0$, considering $I$ as the varying parameter. The method of amplitude equations was employed in Subsec. \ref{AE} to show the existence of antispiral patterns in the vicinity of the two Hopf bifurcations, HB1 and HB2. In the following, we discuss specific patterns that emerge around HB1.

To do so, we construct a spatially extended medium with 2-dimensional spatial diffusion using an $N_x \times N_y=100 \times 100$ mesh grid. We solve numerically the resulting system using the forward Euler method with central differences, a finite-difference scheme, with spatial step size $\Delta=\Delta x=\Delta y=1.25$ and integration time-step $\Delta t=0.1$. The initial conditions are considered with appropriate periodic perturbations from the initial conditions of the system in Sec. \ref{Section2}, using similar boundary conditions as in the 1-dimensional case, adapted to the case. We denote by $u_{i,j}(t)$ the membrane voltage of a neuron at the node $(i, j)$ on the grid at time $t$. Consequently, the sum of the spatial, second-order partial derivatives $\frac{{{\partial^2}u}}{{\partial {x^2}}} + \frac{{{\partial^2}u}}{{\partial {y^2}}}$ can be approximated by
\begin{equation*}
\frac{{{\partial^2}u}}{{\partial {x^2}}} + \frac{{{\partial^2}u}}{{\partial {y^2}}}=\frac{1}{{{\Delta^2}}}\left({{u_{i-1,\,j}} + {u_{i + 1,\,j}} + {u_{i,\,j-1}} + {u_{i,\,j + 1}}-4{u_{i,\,j}}} \right).
\end{equation*}
We note that in the following, we will study the spatiotemporal characteristics in the context of the nonlinear, coupled, diffusive system \eqref{2-dimensionalmodel}, where the coupling indicates the synaptic, diffusive coupling among neurons on the nodes of the $N_x \times N_y$ grid. Furthermore, we construct a square $N_x \times N_y$ domain $\Psi$ (grid) using nearest neighbor connectivities and explore the evolution of target waves and spirals for fixed values of the diffusion coefficient, $D$, using long-time numerical integration, i.e., up to $t=20000$.

\begin{figure*}[!ht]
\centering\includegraphics[width=\textwidth,height=13cm]{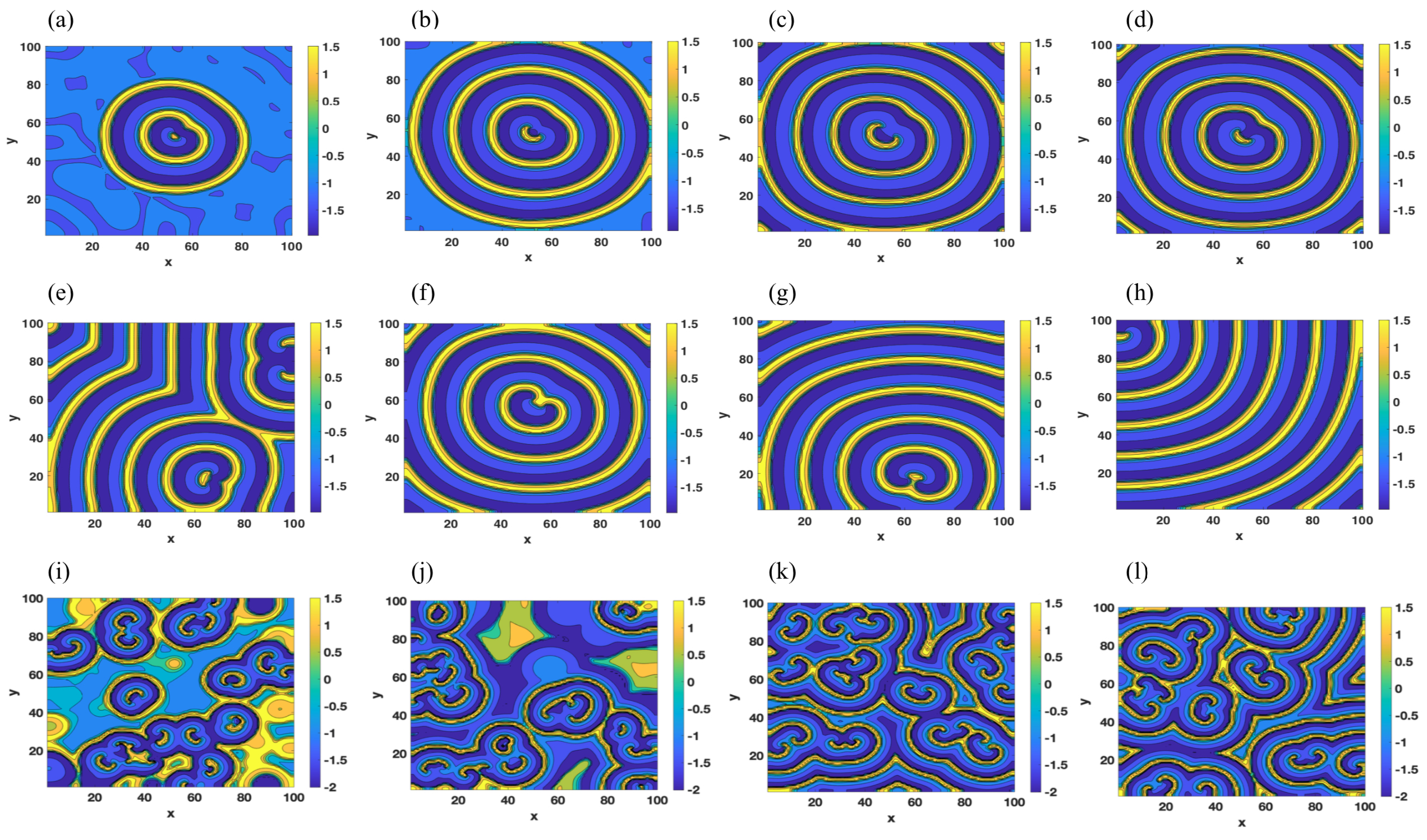}
\caption{Formation of target waves and spiral patterns resulting from the system of diffusively coupled FHR neurons \eqref{2-dimensionalmodel} arranged in a 2-dimensional space for set 1 with $D=0.25$ at $t=3450$, 3500, 20000, 40000 in panels (a) to (d), set 2 with $D=0.3$ at $t=2000$, 3000, 5000, 40000 in panels (e) to (h) and set 3 with $D=0.009$ at $t=3500$, 5000, 20000, 40000 in panels (i) to (l). The color bars encode the values of the membrane voltages $u(x,y,t)$ of the diffusively coupled FHR neurons of system \eqref{2-dimensionalmodel}.}\label{pattern}
\end{figure*}

We start with set 1, for which the single-neuron model \eqref{model} with no diffusion, exhibits mixed-mode oscillations, often known as elliptic type bursting \cite{izhikevich2001synchronisation}. For $D=0.25$, system \eqref{2-dimensionalmodel} gives rise to the emergence of target waves at $t=3450$ shown in Fig. \ref{pattern}(a). We also find that these target waves are sparse as shown in the same panel. The reason might be that most of the neurons cannot activate neighboring neurons in short time intervals. However, as time increases to $20000$, small spiral-like activity can be seen, bounded by target waves \cite{hu2013selection}. The developed target waves become denser, occupying the whole extend of the mesh grid, as shown in panels (b) and (c) in Fig. \ref{pattern}. The coexistence of spirals and target waves can be observed at $t=40000$ in Fig. \ref{pattern}(d), which suggests the patterns in the spatial domain are stable.

Next, we consider set 2 with $D=0.3$, where individual neurons exhibit mixed-mode oscillations. At the initial stage, multi-arm spiral waves emerge from broken waves, which can be observed in Fig. \ref{pattern}(e). As the time increases to $t=3000$, target waves and various wavefronts emerge as can be observed in Fig. \ref{pattern}(f). The spiral-like patterns coexist with target waves until $t=5000$, shown in Fig. \ref{pattern}(g), and the spirals are not stable. The reason may be the excitability of neurons in the arm segments of small spirals being high, attracting other small segments of spiral arms and forming more stable wavefronts. Interestingly, at even higher times, for example at $t=40000$, only spirals can be observed, shown in Fig. \ref{pattern}(h), with the centers of the wavefronts frequently changing over time. A group of small-size spiral profiles emerge that change into multi-arm spiral waves. Our numerical results show the emergence of different wavefronts that develop and break over time, with spirals emerging thereafter.

Multi-arm spiral waves emerge for set 3 for small diffusion coefficient values. For example, at $D=0.009$, target waves appear to cover partially the grid. Initially, we observe the formation of distinct clusters consisting of small curls on the blue backdrop of the mesh grid, that can be seen in panels (i) and (j) in Fig. \ref{pattern}. However, as time increases, for example at $t=20000$ and $t=40000$, multi-arm spirals emerge, shown in panels (k) and (l) in Fig. \ref{pattern}. This shows that the small curls collide with neighboring curls that extend to the whole mesh. With further increase in time,  multi-arm spirals emerge.

Next, we discuss the effect of diffusive coupling on the formation of target waves and spirals for long integration times. At coupling strength $D=0.05$, the formation of two-arm spiral waves \cite{hu2013selection} can be observed at $t=20000$, shown in Fig. \ref{patternD}(a) for set 1. Only a few spirals are visible. However, as $D$ increases slightly, for $D=0.09$ and $D=0.1$, antispiral waves emerge, shown in panels (b) and (c) in Fig. \ref{patternD}. The results are verified using Eq. (\ref{eq10}). It follows from the study of the amplitude equations that the antispirals exist in the vicinity of the two supercritical Hopf points and that they depend on $\alpha$ and $\beta$. Solving the system numerically, we can see the antispirals for suitable values of the diffusive coupling, $D$, depending on $\alpha$ and $\beta$. These multi-arm antispiral waves with an increasing number of arms extend to the whole spatial grid. For even higher diffusion values, for example for $D=0.245$, double spirals can be observed, shown in Fig. \ref{patternD}(d). Periodic forcing and coupling strength play a major role in promoting target waves by suppressing spiral waves \cite{wu2013formation,hu2013selection}. Increasing slightly the diffusive coupling to $D=0.25$, we observe the same route of transition, i.e., two spirals that collide and generate target wave patterns, shown in the first panel in Fig. \ref{pattern}.

When considering set 2 and for smaller couplings, for example for $D=0.1$, multi-arm spirals emerge. For even higher $D$, for example for $D=0.25$ and $D=0.3$, the multi-arm spiral waves dissolve and, target waves and spirals emerge, shown in panels (e) to (g) in Fig. \ref{patternD}. With the further increase of $D$ to 0.9, only spiral patterns can be observed, shown in Fig. \ref{patternD}(h). 

For set 3, with increasing $D$, multi-arm spirals emerge, occupying a greater extend on the spatial grid, observed in panels (i) to (l) in Fig. \ref{patternD}. We cross-checked the appearance of similar spatiotemporal patterns on a bigger, $N_x\times N_y=500\times500$ mesh grid and we obtained similar results, with no significant changes. Finally, the same wavy patterns emerged when running simulations for even longer times, i.e., up to $t=50000$.

\begin{figure*}[!ht]
\centering\includegraphics[width=\textwidth,height=13cm]{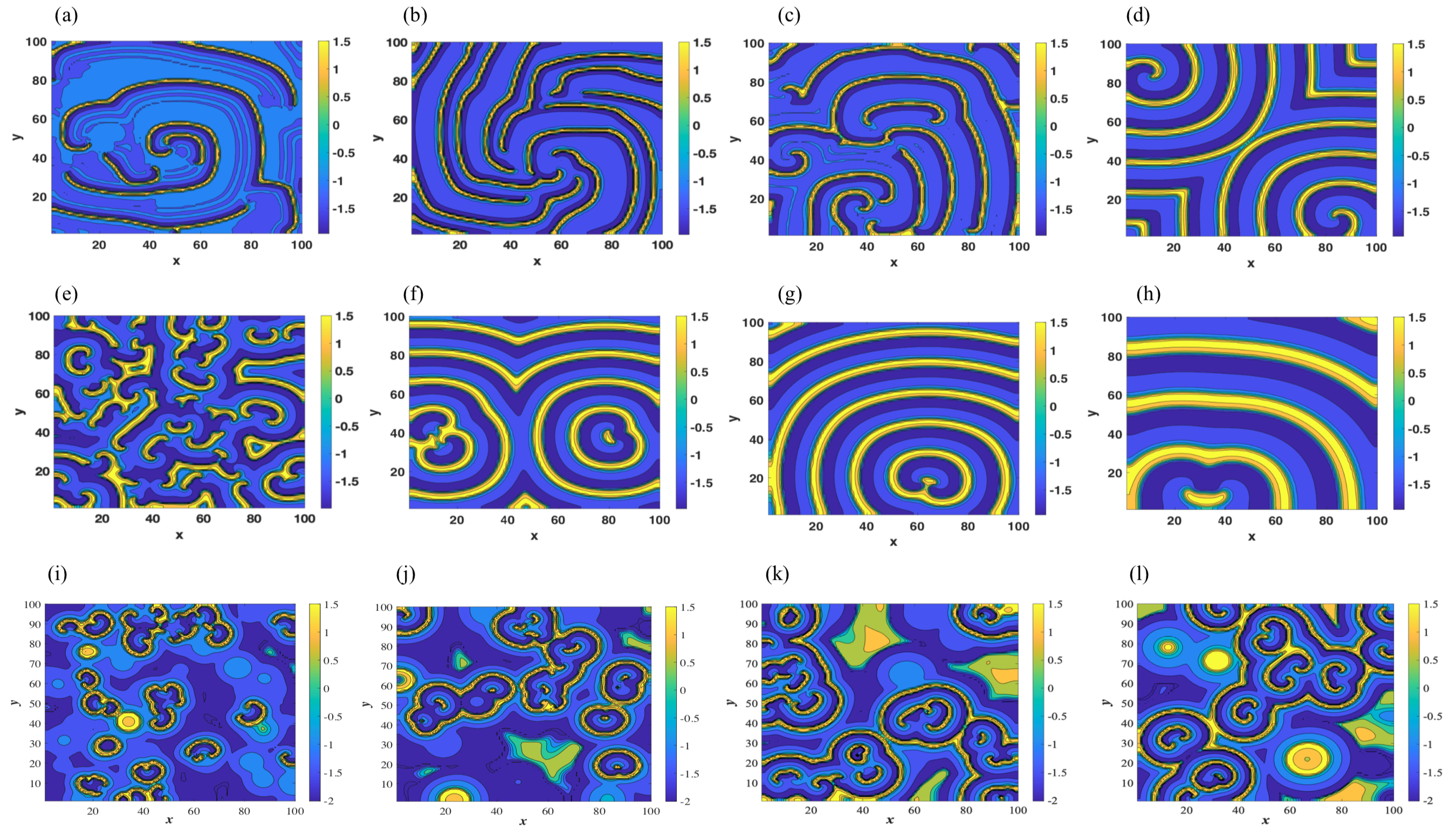}
\caption{Snapshots of pattern formation of the system of diffusively coupled FHR neurons \eqref{2-dimensionalmodel} arranged in a 2-dimensional configuration. Panel (a) $D=0.05$, (b) $D=0.09$, (c) $D=0.1$, (d) $D=0.245$ for set 1, panel (e) $D=0.1$, (f) $D=0.25$, (g) $D=0.3$, (h) $D=0.9$ for set 2, panel (i) $D=0.005$, (j) $D=0.007$, (k) $D=0.009$ and panel (l) $D=0.01$ for set 3. The color bars encode the values of the membrane voltages $u(x,y,t)$ of the diffusively coupled FHR neurons of system \eqref{2-dimensionalmodel}.}\label{patternD}
\end{figure*}

\section{Conclusions and Discussion}\label{Section5}

In this paper, we considered the biophysically motivated, slow-fast, excitable FitzHugh-Rinzel neuron model, which provides a variety of neuronal responses. It exhibits a diverse repertoire of firing patterns for certain fixed sets of parameters and different external current stimuli. We studied the FitzHugh-Rinzel model theoretically and numerically in different dynamical regimes. Further, we discussed its local and global stability. We performed a bifurcation analysis and demonstrated the qualitative changes between stable steady and oscillatory states. We incorporated 1-dimensional diffusion to construct a 1-dimensional chain of neurons and investigated several complex behaviors that pertain to synchronization. The stability analysis was performed for the diffusively coupled system and we showed how the stability changes from an unstable regime to a stable with the increase of the diffusion coefficient. The diffusively coupled system changes its dynamical characteristics as the coupling strength changes. Next, we studied the diffusive system on a square spatial domain. The amplitude equation, describing the onset of spirals close to the Hopf bifurcations, was derived and numerical simulations were provided. As the parameters change, we showed the appearance of target and spiral waves for three sets of parameters that correspond to three distinct oscillatory regimes.

We explored emerging target waves and spirals as dynamical features and verified them analytically using amplitude equations. Our investigation shows that the spatial dynamics of the slow-fast model exhibits two- and multi-arm spiral waves for low diffusion couplings. We found that multi-arm spiral waves with increasing number of arms occupy the 2-dimensional mesh grid and that they are unstable. Bursting in the activity of single neurons is a bio-physiological phenomenon, however, bursting in the activity of neural populations might be pathological \cite{jiang2015formation,meier2015bursting}. Interestingly, these emergent patterns may be relevant to the synchronized activities of neural populations, particularly related to neurological diseases \cite{meier2015bursting}. The propagation of neuronal impulses can be relevant to brain functioning \cite{grace2015reg,hu2013selection,keane2015propagating, wu2013formation}. In a weakly coupled system of pancreatic $\beta$-cells, bursting varies, where pancreatic $\beta$-cells secrete insulin \cite{raghavachari1999waves}. The analysis of the mechanisms underlying spatial profiles of activity in the neural tissue is important in understanding a wide range of biophysical and pathological phenomena \cite{huang2010spiral,izhikevich2007dynamical,jiang2015formation}. Moreover, the form of the nerve impulse propagation is relevant to certain brain pathologies \cite{huang2010spiral,schiff2007dynamical,song2018classification,townsend2018detection}.

Finally, we discussed in detail how nonlinearities in the biophysical, excitable model change the distribution of single-cell characteristics into different types of patterns. Spiral waves emerge frequently in cortical areas with limited lifespan. They can modify cortical activities affecting the oscillation frequency and spatial coherence-like activity. The emergence of spiral waves during sleep-like states varies greatly and, can also organize and modulate the pathological patterns during epilepsy. Spiral waves have been observed to play a major role in organizing irregular dynamics in cortical neurons and rhythmic behavior \cite{huang2010spiral}. Our results reveal a multitude of neural excitabilities and possible conditions for the emergence of spiral-wave formation in diffusively coupled FitzHugh-Rinzel systems with different firing characteristics.

\section*{Data availability}
Data sharing is not applicable to this article as no new data were created or analyzed in this study.

%\section*{ACKNOWLEDGMENTS}

% Argha Mondal (AM) is thankful for the support provided by the Department of Mathematical Sciences, University of Essex, UK and SKS is thankful to University Grants Commission (UGC), Govt. of India under NET-JRF scheme.

%\nocite{*}
%\bibliography{aipsamp}% Produces the bibliography via BibTeX.

%merlin.mbs aipnum4-1.bst 2010-07-25 4.21a (PWD, AO, DPC) hacked
%Control: key (0)
%Control: author (8) initials jnrlst
%Control: editor formatted (1) identically to author
%Control: production of article title (0) allowed
%Control: page (1) range
%Control: year (1) truncated
%Control: production of eprint (0) enabled
\bibliographystyle{unsrt}
%\bibliography{aipsamp.bib}

\end{document}